%% file: root.tex
\documentclass[letterpaper, 10 pt, conference]{ieeeconf}  % Comment this line out
\IEEEoverridecommandlockouts                              % This command is only
\usepackage{graphics, xcolor} % for pdf, bitmapped graphics files
\usepackage{epsfig} % for postscript graphics files
\usepackage{mathptmx} % assumes new font selection scheme installed
\usepackage{times} % assumes new font selection scheme installed
\usepackage{amsmath} % assumes amsmath package installed
\usepackage{amssymb}  % assumes amsmath package installed
\usepackage{mathtools}
\usepackage{eso-pic}
\newtheorem{theorem}{Theorem}
\newtheorem{proposition}[theorem]{Proposition}
\newtheorem{remark}{Remark}
\title{\LARGE \bf
Schur-Neural KF: Learned Schur-Consistent Corrections to the Extended Kalman Filter
}

\author{Min Kim, Lianghao Cao, Soon-Jo Chung, Andrew M. Stuart%
\thanks{The authors are with the California Institute of Technology (Caltech), Pasadena, CA 91125 USA. e-mail: {\tt\small \{mink, lianghao, sjchung, astuart\}@caltech.edu}}%
}

\begin{document}

\AddToShipoutPictureFG*{%
\AtPageUpperLeft{%
\put(\LenToUnit{0.5in},\LenToUnit{-0.25in}){%
      \makebox(0,0)[lt]{%
        \parbox[t]{\dimexpr\paperwidth-1in\relax}{%
          \normalfont
          \fontsize{6.5}{7.5}\selectfont
          \color{black}
          \raggedright
          \copyright~2026 IEEE.  Personal use of this material is permitted.  Permission from IEEE must be obtained for all other uses, in any current or future media, including reprinting/republishing this material for advertising or promotional purposes, creating new collective works, for resale or redistribution to servers or lists, or reuse of any copyrighted component of this work in other works.\\
          Accepted to the 65th IEEE Conference on Decision and Control (CDC 2026).
          \par
        }%
      }%
    }%
  }%
}

\maketitle
\thispagestyle{empty}
\pagestyle{empty}

%%%%%%%%%%%%%%%%%%%%%%%%%%%%%%%%%%%%%%%%%%%%%%%%%%%%%%%%%%%%%%%%%%%%%%%%%%%%%%%%
\begin{abstract}
We present Schur-Neural KF (SN-KF), a learning-based correction to the extended Kalman filter (EKF) that preserves the probabilistic conditioning interpretation of the EKF. The method perturbs the predictive state--measurement cross-covariance and the Cholesky factor of the measurement noise covariance so that the resulting joint predictive covariance is always positive semidefinite. The positive semidefiniteness is ensured by a Schur complement-based parametrization. We instantiate the parametrization with a recurrent neural architecture whose matrix outputs are modulated by amplitude gates. We prove that incorporating a measurement does not increase the filter's state uncertainty, and show that no measurement can induce an arbitrarily large state correction relative to its statistical surprise. We also present a perturbative analysis suggesting SN-KF's structural strength in the data-scarce regime. We provide two numerical experiments to illustrate the practical benefits of SN-KF. In a two-radar experiment, enforcing Schur-consistency provides a much broader failure-free hyperparameter region and reduces RMSE for small training subsets, consistent with our theoretical analysis in the data-scarce regime. In the unicycle experiment, SN-KF achieves the best precision, recall, false alarm rate, and gated RMSE under innovation-based sensor-fault rejection.
\end{abstract}

%%%%%%%%%%%%%%%%%%%%%%%%%%%%%%%%%%%%%%%%%%%%%%%%%%%%%%%%%%%%%%%%%%%%%%%%%%%%%%%%
\section{INTRODUCTION}

State estimation for nonlinear stochastic systems is a core problem in control. When the underlying models are approximately known and the nonlinearities are moderate, the extended Kalman filter (EKF) remains one of the most widely used online estimators because of its computational simplicity. In practice, however, EKF performance can degrade substantially under model mismatch and linearization errors. This has motivated growing interest in learned Kalman-style estimators that use data to compensate for this performance degradation while retaining some of the structure of classical filtering.

Recent work has introduced learning or data-driven approaches into Kalman-style filtering in several ways. KalmanNet~\cite{ReSh2022} and KalmanFormer~\cite{ShCh2025} learn the Kalman gain directly from data using recurrent and transformer architectures, respectively. Split-KalmanNet~\cite{ChPa2023} computes the gain using the measurement Jacobian together with two recurrent networks that learn a priori state covariance and inverse innovation covariance. Cholesky-KalmanNet~\cite{KoSh2025} augments the RNNs of Split-KalmanNet with Cholesky-based output layers, thereby ensuring the positive definiteness of the learned matrices; triangular square-root factors for state covariance appear in the classical literature~\cite{Ca1973}. Recursive KalmanNet~\cite{MoFa2025} further augments this line of work by using two recurrent networks: one for the Kalman gain and the other for a noise covariance term, which are used for estimating the posterior state covariance. A-KIT~\cite{CoKl2025} uses a set-transformer to adapt the process noise covariance online within an EKF. Bayesian KalmanNet~\cite{DaRe2025} samples from a probabilistic ensemble of KalmanNet architectures, thereby directly estimating the posterior state covariance.
Learned components have also been incorporated into the ensemble Kalman filter~\cite{BaBa2026}. Finally, related work on learning state-space models includes deep Markov models~\cite{KrSh2017} and Neural EKF~\cite{LiLa2024}.

These developments show that learned Kalman-style estimators can be effective, but they also leave open a more structural question: whether learning can be introduced while preserving a valid conditioning interpretation of the Kalman update. This paper addresses that question through learned \emph{Schur-consistent corrections} to the EKF. Our architecture, called the Schur-Neural KF (SN-KF), introduces learning at the level of the predictive \emph{joint state--measurement covariance} rather than modifying the state-space model or directly learning the Kalman gain. A Schur complement construction guarantees that the corrected joint covariance remains positive semidefinite.

The contributions of this paper are threefold. \textit{(i)} We formulate learning-augmented EKF updates at the level of the predictive joint state--measurement covariance and derive a complete Schur complement parametrization that guarantees its positive semidefiniteness. We realize this parametrization by a recurrent architecture with learned amplitude gates; \textit{(ii)} we establish key theoretical properties of SN-KF: its posterior covariance is bounded above by the predictive covariance in the PSD order, and its state update cannot be arbitrarily large compared to the measurement's statistical surprise. A perturbation result further suggests that, under a nondegeneracy condition, our Schur complement construction exhibits reduced error in the data-scarce regime. Also, an explicit surrogate conditioning interpretation of the SN-KF and a finite-horizon closeness result to the nominal EKF are proved; \textit{(iii)} we evaluate SN-KF on two nonlinear benchmarks. In the two-radar task, it exhibits enhanced numerical stability by broadening the failure-free hyperparameter region, and achieves the lowest mean state-estimation error in the small-training-set experiment. In the unicycle task, it successfully calibrates the projected normalized innovation squared (NIS), and achieves the best precision, recall, false alarm rate, and gated RMSE under innovation-based sensor-fault rejection.

To explain the underlying idea, recall that the nominal EKF update can be interpreted as conditioning a joint Gaussian approximation of the stacked vector $[x_t^\top,z_t^\top]^\top$, where $x_t$ is the state and $z_t$ is the measurement. From this viewpoint, the update is determined by second-order predictive quantities, in particular the predictive covariance $P_{t|t-1}$, the predictive state--measurement cross-covariance $\bar C_{t|t-1}$, and the innovation covariance $\bar S_t$, where the overbar denotes the nominal EKF quantity before learned correction. When these quantities are modified by learned terms, it is natural to require them to still assemble into a positive semidefinite joint predictive covariance. This requirement is stronger than directly learning a correction gain in isolation, but it preserves the conditioning interpretation underlying the EKF update.

This structural constraint is also relevant for sensor-fault detection. Classical innovation-based fault detection often uses statistical properties of the Kalman filter innovations~\cite{MePe1971}; more broadly, model-based fault detection is often framed as residual generation followed by decision making based on the generated residuals~\cite{ChWi1984}. Accordingly, when learning is introduced into the update step, we seek to retain a well-defined innovation covariance so that downstream innovation-based monitoring and gating remain statistically coherent.

Motivated by these considerations, we propose learned \emph{Schur-consistent corrections} to the EKF. Rather than perturbing the nominal Kalman gain $\bar K_t = \bar C_{t|t-1}\bar S_t^{-1}$ directly, we perturb the nominal predictive state--measurement cross-covariance $\bar C_{t|t-1}$ and learn a measurement-side factor in a way that guarantees positive semidefiniteness of the predictive covariance of $(x_t,z_t)$ through a Schur complement construction. The corrected quantities define a joint predictive covariance~\eqref{eq:block_cov} which is positive semidefinite, so that the corrected innovation covariance and the corrected Kalman gain remain interpretable as Kalman quantities. The resulting method, which we refer to as \emph{Schur-Neural KF (SN-KF)}, keeps the EKF prediction step intact and learns correction terms only for the update step.

\noindent \textbf{Notation.}
Throughout this paper, an overbar denotes the \emph{nominal} EKF quantity before learned correction, while the corresponding unbarred symbol denotes the corrected quantity after learned perturbation. Vector norms are Euclidean norms, and matrix norms are spectral norms. $\mathbb{S}^n$ denotes the vector space of all $n$-by-$n$ real symmetric matrices, and $\mathbb{S}^n_{++}$ denotes its subset of all positive definite matrices.

\section{Schur-Neural Kalman Filter (SN-KF)}

SN-KF keeps the EKF prediction step unchanged and learns only the update step. At time $t$, a recurrent network outputs two corrections, $\Delta C_t$ and $\Delta L_t$, which induce a corrected \emph{joint} predictive covariance for $(x_t,z_t)$, from which the innovation covariance and Kalman gain are obtained. The joint covariance is ensured to be positive semidefinite by a Schur complement construction.

We first present the resulting update equations~\eqref{eq:schur_parametrization},~\eqref{eq:update_all} and then describe the recurrent architecture that parametrizes the corrections in Section~\ref{subsec:neural_architecture}.

\subsection{SN-KF Parametrization and Update Equations}
Consider the discrete-time nonlinear state-space model
\begin{align}
x_t &= f(x_{t-1},u_{t-1}) + w_{t-1}, \\
z_t &= h(x_t) + v_t,
\end{align}
with $f:\mathbb{R}^{n_x + n_u} \to \mathbb{R}^{n_x}$ and $h:\mathbb{R}^{n_x} \to \mathbb{R}^{n_z}$, where $x_t \in \mathbb{R}^{n_x}$ is the state, $z_t \in \mathbb{R}^{n_z}$ is the measurement, $u_t \in \mathbb{R}^{n_u}$ is a known input, and $w_t$ and $v_t$ are zero-mean Gaussian noises with covariances $Q_t \succeq 0$ and $R_t \succ 0$, respectively. Let $\hat x_{t|t-1}$ and $P_{t|t-1}$ denote the EKF predictive mean and covariance,
and let
\begin{align}
F_{t-1} &\triangleq \left.\frac{\partial f}{\partial x}\right|_{\hat x_{t-1|t-1},u_{t-1}}, \\
H_t &\triangleq \left.\frac{\partial h}{\partial x}\right|_{\hat x_{t|t-1}},
\end{align}
where we assumed $f(\cdot, u_{t-1})$ and $h$ are differentiable. The EKF prediction step, which is left unchanged in SN-KF, is
\begin{align}
\hat x_{t|t-1} &= f(\hat x_{t-1|t-1},u_{t-1}), \\
P_{t|t-1} &= F_{t-1} P_{t-1|t-1} F_{t-1}^\top + Q_{t-1}.
\end{align}

We define the joint predictive covariance of the stacked pair $(x_t,z_t)$ as
\begin{equation}
\Sigma_{t|t-1} \triangleq
\begin{bmatrix}
P_{t|t-1} & C_{t|t-1} \\
C_{t|t-1}^\top & S_t
\end{bmatrix},
\label{eq:block_cov}
\end{equation}
where $C_{t|t-1}$ and $S_t$ are the corrected cross-covariance and innovation covariance. Our notion of \emph{Schur-consistency} is precisely that
\begin{equation}
\Sigma_{t|t-1} \succeq 0.
\end{equation}
Assuming $P_{t|t-1} \succ 0$, the Schur complement characterization implies that \eqref{eq:block_cov} is positive semidefinite if and only if
\begin{equation}
S_t - C_{t|t-1}^\top P_{t|t-1}^{-1} C_{t|t-1} \succeq 0.
\label{eq:schur_condition}
\end{equation}

This motivates the following learned parametrization for SN-KF, in which $\Delta C_t$ and $\Delta L_t$ are the learned components:
\begin{subequations}
\begin{align}
C_{t|t-1} &= \bar C_{t|t-1} + \Delta C_t, \label{eq:C_pert}\\
L_t &= \bar L_t + \Delta L_t,
\label{eq:L_pert}\\
S_t &= C_{t|t-1}^\top P_{t|t-1}^{-1} C_{t|t-1} + L_t L_t^\top. \label{eq:S_pert}
\end{align}
\label{eq:schur_parametrization}
\end{subequations}
Here, $\bar L_t$ is the Cholesky factor of $R_t \succ 0$, $\Delta L_t$ is lower-triangular, and the nominal EKF predictive covariances are given by
\begin{subequations}
\begin{align}
\bar C_{t|t-1} &= P_{t|t-1} H_t^\top, \\
\bar S_t &= H_t P_{t|t-1} H_t^\top + R_t=
\bar C_{t|t-1}^\top P_{t|t-1}^{-1} \bar C_{t|t-1} + R_t. 
\end{align}
\end{subequations}
Because $L_t L_t^\top \succeq 0$, the Schur complement in \eqref{eq:schur_condition} is positive semidefinite. Hence, \eqref{eq:block_cov} is automatically Schur-consistent. Moreover, the parametrization is \emph{complete} in the sense that it could parametrize any $(C_{t|t-1}, S_t)$ for which the assembled $\Sigma_{t|t-1}$ is positive semidefinite.

The SN-KF update step is
\begin{subequations}
\begin{align}
\hat x_{t|t} &= \hat x_{t|t-1} + K_t \nu_t, \label{eq:state_update}\\
\begin{split}
P_{t|t} &= P_{t|t-1} - K_t C_{t|t-1}^\top - C_{t|t-1} K_t^\top + K_t S_t K_t^\top\\
&=  P_{t|t-1} - C_{t|t-1}S_t^{-1} C_{t|t-1}^\top\\
&= \left(I-K_t C_{t|t-1}^\top P_{t|t-1}^{-1} \right)P_{t|t-1}\left(I-K_t C_{t|t-1}^\top P_{t|t-1}^{-1} \right)^\top \\
&\qquad +K_t L_t L_t^\top K_t^\top,
\label{eq:cov_update}
\end{split}
\end{align}
\label{eq:update_all}
\end{subequations}
where the innovation is defined as
\begin{equation}
\nu_t \triangleq z_t - h(\hat x_{t|t-1})
\end{equation}
and the Kalman gain associated with the learned joint covariance is defined as
\begin{equation}
K_t \triangleq C_{t|t-1} S_t^{-1}.
\end{equation}
In implementation, we make $L_t$ non-singular by clamping so that $S_t \succ 0$ and $K_t$ is well-defined.

\noindent \textbf{A learned gain correction baseline.} 

For comparison, we define a learned gain correction baseline that predicts an additive correction to the nominal EKF gain:
\begin{align}
\bar K_t &= \bar C_{t|t-1} \bar S_t^{-1}, \\
K_t &= \bar K_t + \Delta K_t,
\end{align}
followed by the state update \eqref{eq:state_update} and the standard Joseph form covariance update equation:
\begin{equation}
P_{t|t} = (I-K_tH_t)P_{t|t-1}(I-K_tH_t)^\top + K_t R_t K_t^\top.
\end{equation}

Unlike~\eqref{eq:C_pert}--\eqref{eq:S_pert}, the learned gain correction baseline does not explicitly enforce that the learned correction arises from a positive semidefinite joint predictive covariance of $(x_t,z_t)$, so a valid conditioning interpretation is not built into the parametrization.

\subsection{Neural Architecture}
\label{subsec:neural_architecture}
The proposed architecture keeps the EKF prediction step unchanged and uses a recurrent neural network only to parametrize the perturbations $\Delta C_t$ and $\Delta L_t$. At time $t$, the encoder receives a history vector
\begin{equation}
\psi_t \triangleq
\begin{bmatrix}
\nu_{t-1} \\
\hat z_{t-1|t-2} \\
u_{t-1}
\end{bmatrix}\text{, }\text{ } \hat z_{t-1|t-2} \triangleq h(\hat x_{t-1|t-2}),
\label{eq:encoder_history_vec}
\end{equation}
and updates a latent feature through a gated recurrent unit (GRU)~\cite{ChVa2014},
\begin{equation}
r_t = \operatorname{GRU}(\psi_t,r_{t-1}).
\end{equation}
This recurrent state summarizes recent innovations, predicted measurements, and controls. For $t=0$, the previous-step quantities are initialized as $r_{-1}=0$, $\nu_{-1}=0$, $\hat z_{-1|-2}=0$, and $u_{-1}=0$.

\paragraph*{SN-KF}
The Schur-consistent model uses two neural heads driven by the shared recurrent feature $r_t$. Each head uses separate multilayer perceptrons for its matrix output and gate vector, with spectral normalization applied to each MLP's last linear layer.

First, the \emph{cross-covariance head} outputs a raw matrix $M_t^{(C)} \in \mathbb{R}^{n_x \times n_z}$ together with a gate vector $g_t^{(C)} \in \mathbb{R}^{n_z}$. We define
\begin{equation}
\Delta C_t = M_t^{(C)} \operatorname{diag}\!\left(\alpha_C \,\sigma(g_t^{(C)})\right),
\label{eq:deltaC_net}
\end{equation}
where $\sigma(\cdot)$ is the standard logistic function and $\alpha_C>0$ is a trainable correction scale. The diagonal gate controls the per-measurement amplitude.

Second, the \emph{innovation-factor head} outputs a vector of $n_z(n_z+1)/2$ entries together with a gate vector $g_t^{(L)} \in \mathbb{R}^{n_z}$; the former is reshaped into a lower-triangular matrix $M_t^{(L)} \in \mathbb{R}^{n_z\times n_z}$. We use the same idea:
\begin{equation}
\Delta L_t = M_t^{(L)} \operatorname{diag}\!\left(\alpha_L \,\sigma(g_t^{(L)})\right),
\label{eq:deltaL_net}
\end{equation}
with a trainable positive scale $\alpha_L$. The effective factor is then
\begin{equation}
L_t = \bar L_t + \Delta L_t.
\end{equation}
In implementation, we additionally clamp the diagonal entries of the lower-triangular matrix $L_t$ from below by a small positive constant (specifically, we use $10^{-4}$). This alters the final $\Delta L_t$ value and makes the final $L_t$ non-singular, which implies $S_t \succ 0$; hence $K_t = C_{t|t-1} S_t^{-1}$ is well-defined.

The amplitude gates in~\eqref{eq:deltaC_net}~and~\eqref{eq:deltaL_net} partially separate the directions of the corrections from their column-wise amplitudes.

\paragraph*{Learned gain correction baseline}
To compare against direct residual learning of the Kalman gain, we use the same recurrent encoder but replace the Schur-consistent heads by a single \emph{gain head}. This head outputs a raw matrix $M_t^{(K)} \in \mathbb{R}^{n_x\times n_z}$ and a gate vector $g_t^{(K)} \in \mathbb{R}^{n_z}$, and defines
\begin{align}
\Delta K_t&=M_t^{(K)}\operatorname{diag}\!\left(\alpha_K \,\sigma(g_t^{(K)})\right),\\
K_t &= \bar K_t + \Delta K_t,
\end{align}
where $\alpha_K>0$ is again a trainable positive scale. Each of the global correction scales, $\alpha_C$, $\alpha_L$, and $\alpha_K$ is parametrized as the softplus of a scalar trainable parameter and is frozen after offline training.

\section{Theoretical Properties of SN-KF}
We begin by showing that the learned corrections preserve positive semidefiniteness of the \emph{joint} predictive covariance.
\begin{proposition}
Assume $P_{t|t-1} \succ 0$. For any matrices $\Delta C_t$ and $\Delta L_t$, define $C_{t|t-1}$ and $S_t$ by~\eqref{eq:C_pert}--\eqref{eq:S_pert}. Then $\Sigma_{t|t-1} \succeq 0$. Consequently, the updated covariance
\begin{equation}
P_{t|t} = P_{t|t-1} - C_{t|t-1} S_t^{-1} C_{t|t-1}^\top
\label{eq:conditional_cov}
\end{equation}
is also positive semidefinite whenever $S_t \succ 0$. Finally, the parametrization~\eqref{eq:schur_parametrization} is rich enough to parametrize any $(C_{t|t-1}, S_t)$ for which the assembled $\Sigma_{t|t-1}$ is positive semidefinite.
\end{proposition}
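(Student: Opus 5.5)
The plan has three parts, one for each claim of the proposition. Throughout, write $P=P_{t|t-1}\succ 0$, $C=C_{t|t-1}$, $S=S_t$, $L=L_t$.

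\emph{Part 1: $\Sigma_{t|t-1}\succeq 0$.} We exhibit an explicit congruence rather than relying only on the Schur complement criterion~\eqref{eq:schur_condition}. With $E=\begin{bmatrix} I & 0\\ C^\top P^{-1} & I\end{bmatrix}$, which is invertible, a direct block multiplication using~\eqref{eq:S_pert} gives
\begin{equation*}
\Sigma_{t|t-1} = E\begin{bmatrix} P & 0\\ 0 & S - C^\top P^{-1}C\end{bmatrix}E^\top = E\begin{bmatrix} P & 0\\ 0 & LL^\top\end{bmatrix}E^\top .
\end{equation*}
The block-diagonal middle factor is PSD, so $\Sigma_{t|t-1}\succeq 0$. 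This step uses nothing about $\Delta C_t,\Delta L_t$ beyond their dimensions.

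\emph{Part 2: $P_{t|t}\succeq 0$ when $S\succ 0$.} We apply the Schur complement characterization with respect to the other diagonal block. Set $F=\begin{bmatrix} I & -CS^{-1}\\ 0 & I\end{bmatrix}$. Then
\begin{equation*}
F\Sigma_{t|t-1}F^\top=\operatorname{blkdiag}\bigl(P-CS^{-1}C^\top,\;S\bigr).
\end{equation*}
Since congruence preserves PSD, the upper-left block is PSD, and that block is exactly~\eqref{eq:conditional_cov}. Equivalently, for any $y$ we can evaluate $\Sigma_{t|t-1}$ at $(y,-S^{-1}C^\top y)$ to get $y^\top P_{t|t}y\ge 0$.

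\emph{Part 3: completeness.} Take any $(C,S)$ for which the assembled $\Sigma\succeq 0$. Set $\Delta C_t=C-\bar C_{t|t-1}$, so that~\eqref{eq:C_pert} holds. By Part 1's congruence run in reverse, with $P\succ 0$, $\Sigma\succeq 0$ forces $M\triangleq S-C^\top P^{-1}C\succeq 0$. Every PSD matrix has a lower-triangular factor $M=LL^\top$. When $M\succ 0$ this is the usual Cholesky factor. When $M$ is singular it still exists, for example via a pivoted-free limit argument: take Cholesky factors of $M+\epsilon I$, which stay bounded because their norm is at most $\|M+\epsilon I\|^{1/2}$, extract a convergent subsequence, and note that the lower-triangular set is closed. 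We then set $\Delta L_t=L-\bar L_t$, which is lower-triangular, and~\eqref{eq:S_pert} reproduces $S$.

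The main, though mild, obstacle is this existence of a lower-triangular factor for a singular PSD $M$. The compactness argument above handles it. Alternatively, we can note that the statement only requires $\Delta L_t$ to be an arbitrary matrix ("for any matrices"), in which case any square root of $M$, such as $M^{1/2}$, suffices.
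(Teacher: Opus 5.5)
Your proposal is correct and follows the same route as the paper. The Schur complement of the upper-left block equals $L_tL_t^\top \succeq 0$, $P_{t|t}$ is the Schur complement of $S_t$, and completeness comes from a lower-triangular factorization of the PSD matrix $S_t - C_{t|t-1}^\top P_{t|t-1}^{-1}C_{t|t-1}$; your block congruences and compactness argument for singular $M$ just prove the standard facts the paper cites.
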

\begin{proof}
Using \eqref{eq:S_pert},
\begin{equation}
S_t - C_{t|t-1}^\top P_{t|t-1}^{-1} C_{t|t-1}=L_t L_t^\top \succeq 0.
\end{equation}
Hence the Schur complement of the upper-left block in \eqref{eq:block_cov} is positive semidefinite, which implies $\Sigma_{t|t-1} \succeq 0$. The conditional covariance formula \eqref{eq:conditional_cov} is the Schur complement of $S_t$. The final claim follows from the fact that any real $A \succeq 0$ has a decomposition of the form $A=L L^\top$ for some $L$ real and lower-triangular.
\end{proof}

The next result analyzes the SN-KF update equations to show that these updates remain reasonable. Concretely, the posterior mean update in normalized coordinates is bounded by a suitably normalized innovation vector; moreover, the posterior covariance remains no greater than the predictive covariance.

\begin{theorem}
Assume $P_{t|t-1}\succ0$ and let $L_t\in\mathbb R^{n_z\times n_z}$ be non-singular. For every innovation vector $\nu_t\in\mathbb R^{n_z}$,
\begin{equation}
\|P_{t|t-1}^{-1/2}K_t\nu_t\| \le \min\left(\frac12\,\|L_t^{-1}\nu_t\|, \sqrt{\nu_t^\top S_t^{-1} \nu_t} \right).
\end{equation}
Also,
\begin{equation}
0\prec P_{t|t}\preceq P_{t|t-1}.
\label{eq:posterior_cov_order}
\end{equation}
\end{theorem}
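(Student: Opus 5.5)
The plan is to whiten everything. Set $A \triangleq P_{t|t-1}^{-1/2} C_{t|t-1}\in\mathbb R^{n_x\times n_z}$. Then \eqref{eq:S_pert} reads $S_t = A^\top A + L_tL_t^\top$. Since $L_t$ is non-singular, we have $S_t\succeq L_tL_t^\top\succ 0$, so $K_t$ is well-defined, and
\begin{equation}
P_{t|t-1}^{-1/2}K_t\nu_t = A S_t^{-1}\nu_t .
\end{equation}
Next change variables on the measurement side. Let $B\triangleq A L_t^{-\top}$ and $y\triangleq L_t^{-1}\nu_t$. Then $S_t = L_t(B^\top B+I)L_t^\top$, and hence
\begin{equation}
A S_t^{-1}\nu_t = B(I+B^\top B)^{-1} y .
\end{equation}
Both bounds will follow from a spectral analysis of the single matrix $B(I+B^\top B)^{-1}$.

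For the first bound, take the SVD $B=U\Sigma V^\top$ with singular values $s_i\ge0$. Then $B(I+B^\top B)^{-1}=U\,\mathrm{diag}(s_i/(1+s_i^2))\,V^\top$. By AM--GM, $s/(1+s^2)\le 1/2$, so the spectral norm is at most $1/2$. This gives $\|P_{t|t-1}^{-1/2}K_t\nu_t\|\le \tfrac12\|L_t^{-1}\nu_t\|$.

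For the second bound, write $w\triangleq S_t^{-1/2}\nu_t$, so that $\|w\|^2=\nu_t^\top S_t^{-1}\nu_t$. Then $AS_t^{-1}\nu_t = (A S_t^{-1/2})\,w$. Moreover,
\begin{equation}
(AS_t^{-1/2})^\top(AS_t^{-1/2}) = S_t^{-1/2}A^\top A S_t^{-1/2}\preceq S_t^{-1/2}S_t S_t^{-1/2}=I,
\end{equation}
because $A^\top A\preceq S_t$. Thus $\|AS_t^{-1/2}\|\le1$, which yields the $\sqrt{\nu_t^\top S_t^{-1}\nu_t}$ bound. Taking the minimum of the two bounds gives the first claim.

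For \eqref{eq:posterior_cov_order}, the upper bound is immediate: $P_{t|t}=P_{t|t-1}-C_{t|t-1}S_t^{-1}C_{t|t-1}^\top$ and the subtracted term is PSD since $S_t\succ0$. For strict positivity, conjugate by $P_{t|t-1}^{-1/2}$:
\begin{equation}
P_{t|t-1}^{-1/2}P_{t|t}P_{t|t-1}^{-1/2}=I-A(A^\top A+LL^\top)^{-1}A^\top .
\end{equation}
Apply the Woodbury identity with $M\triangleq LL^\top\succ0$:
\begin{equation}
I - A(A^\top A+M)^{-1}A^\top = (I + AM^{-1}A^\top)^{-1}\succ 0 .
\end{equation}
Equivalently, the nonzero eigenvalues of $A(A^\top A+M)^{-1}A^\top$ are $s_i^2/(1+s_i^2)<1$ in the $B$-coordinates. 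Conjugating back by $P_{t|t-1}^{1/2}$ gives $P_{t|t}\succ0$.

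The main obstacle is strict definiteness, as opposed to mere semidefiniteness (the latter already follows from the Proposition). The point to verify carefully is that non-singularity of $L_t$ is exactly what makes the Woodbury step valid and keeps every eigenvalue strictly below $1$. Everything else is routine once the whitened variables $A$ and $B$ are introduced.
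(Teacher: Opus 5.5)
Your proof is correct and follows essentially the same route as the paper. Your $B = AL_t^{-\top}$ is exactly the paper's $M = P_{t|t-1}^{-1/2}C_{t|t-1}L_t^{-\top}$; the $\tfrac12$ bound comes from the same SVD of $B(I+B^\top B)^{-1}$; and the covariance ordering comes from the same Woodbury identity $P_{t|t} = P_{t|t-1}^{1/2}(I+BB^\top)^{-1}P_{t|t-1}^{1/2}$. The second bound differs only cosmetically: you write it as $\|AS_t^{-1/2}\|\le 1$ using $A^\top A\preceq S_t$, while the paper writes $K_t^\top P_{t|t-1}^{-1}K_t = S_t^{-1}-S_t^{-1}L_tL_t^\top S_t^{-1}\preceq S_t^{-1}$, which is the same fact.
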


\begin{proof}
Write $P\triangleq P_{t|t-1}$, $C\triangleq C_{t|t-1}$, $L\triangleq L_t$, $S \triangleq S_t$, $K \triangleq K_t$, and define $M\triangleq P^{-1/2} C L^{-T}$. We have $C = P^{1/2} M L^\top$, and hence $S = L (I + M^\top M) L^{\top}$. Therefore, we have
\begin{equation*}
P^{-1/2}K L=P^{-1/2}CS^{-1}L=M(I+M^\top M)^{-1}.
\end{equation*}
Now let $M=U\Sigma V^\top$ be a singular value decomposition of $M$. Then
\begin{equation*}
M(I+M^\top M)^{-1}=U\,\Sigma(I+\Sigma^\top \Sigma)^{-1}V^\top.
\end{equation*}

Therefore, each singular value of $P^{-1/2}K L$ is at most $1/2$, because $x/(1+x^2) \le 1/2$ for all $x \ge 0$. Combined with $K^\top P^{-1}K = S^{-1}-S^{-1}L L^\top S^{-1}\preceq S^{-1}$, this proves the first assertion. For the second assertion, note that $P_{t|t} = P^{1/2} \left[ I - M(I+M^\top M)^{-1} M^\top  \right]P^{1/2} = P^{1/2} (I+MM^\top)^{-1} P^{1/2}$, where the second equality comes from the Woodbury matrix identity. Pre-multiplying and post-multiplying the symmetric matrix $P^{1/2}$ to $0 \prec (I+MM^\top)^{-1}\preceq I$ gives~\eqref{eq:posterior_cov_order}.
\end{proof}

\begin{remark}[Comparison to direct gain-learning] In the scalar case $n_x=n_z=1$ with $P_{t|t-1}=1$, $H_t=1$, $R_t=1$, suppose a direct gain-learning architecture outputs an arbitrary scalar gain $k_t$. The Joseph form covariance update becomes $P_{t|t}=1-2k_t+2k_t^2$.
Therefore, $P_{t|t}>P_{t|t-1}$ whenever $k_t<0$ or $k_t>1$, and $P_{t|t}\to\infty$ as $|k_t|\to\infty$.
\end{remark}

We now give a heuristic explanation for why SN-KF may be beneficial in the data-scarce regime. Fix a time $t$, and denote $P=P_{t|t-1}\succ0$, $L=L_t$, where $L$ is non-singular. Let $C^\dagger \in \mathbb{R}^{n_x \times n_z}$ be a reference (``ground-truth'') cross-covariance, and define $S^\dagger\triangleq C^{\dagger\top}P^{-1}C^\dagger+LL^\top$, $K^\dagger\triangleq C^\dagger(S^\dagger)^{-1}$.
For a fixed perturbation direction $H\in\mathbb R^{n_x\times n_z}$ and an $\varepsilon>0$, let $C_\varepsilon \triangleq C^\dagger+\varepsilon H$.

We compare the SN-KF gain $K^{\mathrm{SN}} \triangleq C_\varepsilon \left(C_\varepsilon^\top P^{-1}C_\varepsilon+LL^\top\right)^{-1}$ with $K^{\mathrm{TD}}\triangleq C_\varepsilon(S^\dagger)^{-1}$, calculated using the true denominator $S^\dagger$. For any $K\in\mathbb R^{n_x\times n_z}$, define the normalized \emph{gain error}
\begin{equation*}
\mathcal E(K) \triangleq \left\|P^{-1/2}(K-K^\dagger)(S^\dagger)^{1/2}\right\|_F^2.
\end{equation*}

\begin{proposition}
We have $\mathcal E\!\left(K^{\mathrm{SN}}\right)= \mathcal E\!\left(K^{\mathrm{TD}}\right)-\varepsilon^2\Delta(H)+ o(\varepsilon^2)$ as $\varepsilon\to0+$, where $\Delta(H)\ge0$. If $C^{\dagger\top}P^{-1}H + H^\top P^{-1}C^\dagger \neq0$, we have $\Delta(H)>0$.
\end{proposition}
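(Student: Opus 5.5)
The plan is a first-order expansion of $K^{\mathrm{SN}}$ around $C^\dagger$. We then compare the two normalized errors at order $\varepsilon^2$ and reduce $\Delta(H)$ to a trace of the form $\operatorname{tr}(\tilde D\,G\,\tilde D)$ with $G\succ 0$. Write $S\triangleq S^\dagger$ and $D\triangleq C^{\dagger\top}P^{-1}H+H^\top P^{-1}C^\dagger$, which is symmetric.

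\textbf{Expansion of the denominator.} The SN-KF denominator is
\begin{equation*}
C_\varepsilon^\top P^{-1}C_\varepsilon+LL^\top = S+\varepsilon D+\varepsilon^2 H^\top P^{-1}H .
\end{equation*}
Since $S\succ0$ (because $L$ is non-singular), the Neumann series gives
\begin{equation*}
\left(C_\varepsilon^\top P^{-1}C_\varepsilon+LL^\top\right)^{-1}=S^{-1}-\varepsilon S^{-1}DS^{-1}+O(\varepsilon^2).
\end{equation*}
Multiplying by $C_\varepsilon=C^\dagger+\varepsilon H$ yields
\begin{align*}
K^{\mathrm{SN}}-K^\dagger&=\varepsilon\left(HS^{-1}-C^\dagger S^{-1}DS^{-1}\right)+O(\varepsilon^2),\\
K^{\mathrm{TD}}-K^\dagger&=\varepsilon HS^{-1}.
\end{align*}

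\textbf{Normalization.} Introduce
\begin{equation*}
A\triangleq P^{-1/2}HS^{-1/2},\qquad B\triangleq P^{-1/2}C^\dagger S^{-1/2},\qquad \tilde D\triangleq S^{-1/2}DS^{-1/2}=B^\top A+A^\top B .
\end{equation*}
In these coordinates $P^{-1/2}(K^{\mathrm{TD}}-K^\dagger)S^{1/2}=\varepsilon A$. Likewise $P^{-1/2}(K^{\mathrm{SN}}-K^\dagger)S^{1/2}=\varepsilon(A-B\tilde D)+O(\varepsilon^2)$. Squaring the Frobenius norm, the cross term between the $O(\varepsilon)$ and $O(\varepsilon^2)$ parts is $O(\varepsilon^3)=o(\varepsilon^2)$. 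Hence $\mathcal E(K^{\mathrm{TD}})=\varepsilon^2\|A\|_F^2$ exactly and $\mathcal E(K^{\mathrm{SN}})=\varepsilon^2\|A-B\tilde D\|_F^2+o(\varepsilon^2)$. This suggests defining $\Delta(H)\triangleq\|A\|_F^2-\|A-B\tilde D\|_F^2$.

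\textbf{Sign of $\Delta(H)$.} This is the main step. Expanding,
\begin{equation*}
\Delta(H)=2\operatorname{tr}(A^\top B\tilde D)-\operatorname{tr}(\tilde D B^\top B\tilde D).
\end{equation*}
Because $\tilde D$ is symmetric, $2\operatorname{tr}(A^\top B\tilde D)=\operatorname{tr}\big((A^\top B+B^\top A)\tilde D\big)=\operatorname{tr}(\tilde D^2)$. Therefore
\begin{equation*}
\Delta(H)=\operatorname{tr}\big(\tilde D\,(I-B^\top B)\,\tilde D\big).
\end{equation*}
Next use the Schur parametrization $S=C^{\dagger\top}P^{-1}C^\dagger+LL^\top$, which gives
\begin{equation*}
I-B^\top B=S^{-1/2}\big(S-C^{\dagger\top}P^{-1}C^\dagger\big)S^{-1/2}=S^{-1/2}LL^\top S^{-1/2}\succ0 .
\end{equation*}
Positive definiteness follows from the non-singularity of $L$. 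Writing $G\triangleq I-B^\top B$, we get $\Delta(H)=\|G^{1/2}\tilde D\|_F^2\ge0$, with equality iff $\tilde D=0$. Finally, $\tilde D=0$ iff $D=0$ since $S^{-1/2}$ is invertible. So $D\neq0$ gives $\Delta(H)>0$, which proves the strict claim.

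The only place that needs care is the middle step. One must check that the $O(\varepsilon^2)$ remainders in the gain expansion affect the error only at $o(\varepsilon^2)$, and use the symmetry of $\tilde D$ to collapse the cross term into $\operatorname{tr}(\tilde D^2)$. Once that is done, the Schur-complement identity $S-C^{\dagger\top}P^{-1}C^\dagger=LL^\top$ supplies the positivity directly.
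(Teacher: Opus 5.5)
Your proof is correct and follows essentially the same route as the paper: a first-order expansion of the SN-KF gain around $C^\dagger$, comparison of the two normalized errors at order $\varepsilon^2$, and the symmetry of $\Gamma$ (your $D$) combined with the identity $C^{\dagger\top}P^{-1}C^\dagger=S^\dagger-LL^\top$. Your $\Delta(H)=\|G^{1/2}\tilde D\|_F^2$ coincides with the paper's $\|L^\top W\Gamma W^{1/2}\|_F^2$; the normalized coordinates $A,B,\tilde D$ just carry out explicitly the ``one can check'' algebra and the strict-positivity step that the paper leaves to the reader.
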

\begin{remark}[Connection to the data-scarce regime]
Suppose that the learned cross-covariance error has the typical scale $C_{t|t-1}-C^\dagger \sim N^{-1/2}$, where $N$ is the training set size. Heuristically, identifying $\varepsilon$ with $N^{-1/2}$ gives a reduction in the gain error of order $N^{-1}$, which could be significant in the small-$N$ regime.
\end{remark}

\begin{proof}
Set $W\triangleq(S^\dagger)^{-1}$ and $\Gamma\triangleq C^{\dagger\top}P^{-1}H+H^\top P^{-1}C^\dagger$.
Define the SN-KF gain map $\mathcal K(C)\triangleq C(C^\top P^{-1}C+LL^\top)^{-1}$. From the product rule, we have the following directional derivative:
\begin{equation*}
D\triangleq D\mathcal K({C^\dagger})[H] = HW-C^\dagger W\Gamma W.
\end{equation*}
Hence,
$K^{\mathrm{SN}} = K^\dagger +\varepsilon D +o(\varepsilon)$,
whereas
$K^{\mathrm{TD}}=K^\dagger+\varepsilon HW$. Therefore,
\begin{align*}
\mathcal E\!\left(K^{\mathrm{SN}}\right)&=\varepsilon^2 \left\| P^{-1/2}D (S^\dagger)^{1/2} \right\|_F^2 +o(\varepsilon^2),\\
\mathcal E\!\left(K^{\mathrm{TD}}\right)&=\varepsilon^2 \left\|P^{-1/2}H(S^\dagger)^{-1/2}\right\|_F^2,
\end{align*}
where the first equality comes from Cauchy--Schwarz.

Define $\Delta(H) \triangleq \left\| L^\top W \Gamma W^{1/2} \right\|_F^2 \ge0$. Since $\Gamma=\Gamma^\top$ and $C^{\dagger\top}P^{-1}C^\dagger=S^\dagger-LL^\top$, one can check that
\begin{equation*}
\begin{split}
\left\|P^{-1/2}D(S^\dagger)^{1/2} \right\|_F^2&=\operatorname{tr}(P^{-1} HWH^\top) - \operatorname{tr}(LL^\top W\Gamma W \Gamma W)\\
&=\left\|P^{-1/2}H(S^\dagger)^{-1/2}\right\|_F^2 - \Delta (H).
\end{split}
\end{equation*}
This completes the proof.
\end{proof}

Our fourth result gives a probabilistic interpretation of the SN-KF parametrization. The SN-KF update arises as conditioning in an explicit surrogate Gaussian model.

\begin{proposition}[Surrogate Gaussian interpretation]
Fix $t$, and condition on all information available before assimilating $z_t$. Assume $P_{t|t-1}\succ0$ and that $L_t$ is non-singular. Define auxiliary random vectors $(\widetilde x_t,\widetilde z_t)$ by
\begin{align}
\widetilde x_t&\sim\mathcal N(\hat x_{t|t-1},P_{t|t-1}),\\
\widetilde z_t&=h(\hat x_{t|t-1})+C_{t|t-1}^\top P_{t|t-1}^{-1}
\bigl(\widetilde x_t-\hat x_{t|t-1}\bigr)+\eta_t,\\
\eta_t&\sim \mathcal N(0,L_tL_t^\top),
\end{align}
with $\eta_t$ independent of $\widetilde x_t$. Then the following hold.
\begin{enumerate}
\item $(\widetilde x_t,\widetilde z_t)$ is jointly Gaussian with covariance $\Sigma_{t|t-1}$ from \eqref{eq:block_cov}.

\item For every $z\in\mathbb R^{n_z}$,
\begin{equation}
\widetilde x_t \mid (\widetilde z_t=z)\sim\mathcal N\!\Bigl(\hat x_{t|t-1}+K_t\bigl(z-h(\hat x_{t|t-1})\bigr),P_{t|t} \Bigr).
\label{eq:surrogate_conditional}
\end{equation}
In particular, at $z=z_t$, the conditional mean is \eqref{eq:state_update}, and the conditional covariance is \eqref{eq:cov_update}.

\item If $p_{\widetilde z_t}$ denotes the probability density function of $\widetilde z_t$, then, evaluated at the actual measurement $z_t$,
\begin{equation}
-\log p_{\widetilde z_t}(z_t)=\frac12 \nu_t^\top S_t^{-1}\nu_t+\frac12 \log\det S_t + \frac{n_z}{2}\log(2\pi).
\end{equation}

\item Let
\begin{equation*}
q_t^-=\mathcal N(\hat x_{t|t-1},P_{t|t-1}),
\end{equation*}
and let $q_t^+(\cdot\mid z)$ be the Gaussian law in \eqref{eq:surrogate_conditional}. Then
\begin{equation}
\begin{split}
\mathbb E\!\left[D_{\mathrm{KL}}\!\left(q_t^+(\cdot\mid \widetilde z_t)\,\|\,q_t^-\right)\right]
&= \frac 1 2 \log\frac{\det P_{t|t-1}}{\det P_{t|t}}
\\
&=\frac12 \log\frac{\det S_t}{\det(L_tL_t^\top)},
\label{eq:surrogate_kl}
\end{split}
\end{equation}
where the expectation is with respect to $\widetilde z_t$.
\end{enumerate}
\end{proposition}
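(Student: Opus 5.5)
The plan is to compute everything directly from the affine structure of the surrogate model. Part 1 is immediate from this structure. Write $\delta \triangleq \widetilde x_t-\hat x_{t|t-1}$ and $G\triangleq C_{t|t-1}^\top P_{t|t-1}^{-1}$, so that $\widetilde z_t = h(\hat x_{t|t-1}) + G\delta + \eta_t$. Since $(\delta,\eta_t)$ is jointly Gaussian (independent Gaussians), any affine image of it is jointly Gaussian, so $(\widetilde x_t,\widetilde z_t)$ is Gaussian with mean $(\hat x_{t|t-1}, h(\hat x_{t|t-1}))$. Its covariance blocks are:
\begin{itemize}
\item $\operatorname{Cov}(\widetilde x_t)=P_{t|t-1}$;
\item $\operatorname{Cov}(\widetilde x_t,\widetilde z_t)=P_{t|t-1}G^\top=C_{t|t-1}$, using independence of $\eta_t$;
\item $\operatorname{Cov}(\widetilde z_t)=GP_{t|t-1}G^\top+L_tL_t^\top=C_{t|t-1}^\top P_{t|t-1}^{-1}C_{t|t-1}+L_tL_t^\top=S_t$ by \eqref{eq:S_pert}.
\end{itemize}
This is exactly $\Sigma_{t|t-1}$ in \eqref{eq:block_cov}.

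For Part 2, I will invoke the standard Gaussian conditioning formula. Since $L_t$ is non-singular, $S_t\succeq L_tL_t^\top\succ0$ is invertible. The conditional law is then Gaussian with mean $\hat x_{t|t-1}+C_{t|t-1}S_t^{-1}(z-h(\hat x_{t|t-1}))=\hat x_{t|t-1}+K_t(z-h(\hat x_{t|t-1}))$ and covariance $P_{t|t-1}-C_{t|t-1}S_t^{-1}C_{t|t-1}^\top$. The latter is the second line of \eqref{eq:cov_update}. Equality with the other two forms in \eqref{eq:cov_update} is algebra already asserted in the update equations; I will verify it briefly by substituting $K_tS_t=C_{t|t-1}$. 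Taking $z=z_t$ gives $\nu_t$ and recovers \eqref{eq:state_update}.

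Part 3 follows from $\widetilde z_t\sim\mathcal N(h(\hat x_{t|t-1}),S_t)$, which is Part 1. Writing out the Gaussian log-density at $z_t$, with $z_t-h(\hat x_{t|t-1})=\nu_t$, gives the stated formula.

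Part 4 is the main step. The KL divergence between Gaussians $\mathcal N(m_1,\Sigma_1)$ and $\mathcal N(m_0,\Sigma_0)$ equals
\[
\tfrac12\bigl[\operatorname{tr}(\Sigma_0^{-1}\Sigma_1)+(m_1-m_0)^\top\Sigma_0^{-1}(m_1-m_0)-n_x+\log\det\Sigma_0-\log\det\Sigma_1\bigr].
\]
Here $\Sigma_1=P_{t|t}$ (nonsingular by Theorem~1, since $0\prec P_{t|t}$), $\Sigma_0=P_{t|t-1}$, and $m_1-m_0=K_t\widetilde\nu$ with $\widetilde\nu\triangleq\widetilde z_t-h(\hat x_{t|t-1})\sim\mathcal N(0,S_t)$. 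Taking expectations, the quadratic term becomes $\operatorname{tr}(P_{t|t-1}^{-1}K_tS_tK_t^\top)=\operatorname{tr}(P_{t|t-1}^{-1}C_{t|t-1}S_t^{-1}C_{t|t-1}^\top)$. Adding $\operatorname{tr}(P_{t|t-1}^{-1}P_{t|t})=n_x-\operatorname{tr}(P_{t|t-1}^{-1}C_{t|t-1}S_t^{-1}C_{t|t-1}^\top)$ gives exactly $n_x$, so the trace terms cancel with $-n_x$. This leaves $\tfrac12\log(\det P_{t|t-1}/\det P_{t|t})$, which is the first equality.

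For the second equality, I will use the determinant factorization of the block matrix $\Sigma_{t|t-1}$ in two ways:
\[
\det\Sigma_{t|t-1}=\det P_{t|t-1}\,\det(L_tL_t^\top)=\det S_t\,\det P_{t|t},
\]
using the Schur complement of each diagonal block. This is valid because both $P_{t|t-1}$ and $S_t$ are invertible. Rearranging gives $\det P_{t|t-1}/\det P_{t|t}=\det S_t/\det(L_tL_t^\top)$.

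The only delicate point is checking that the expected quadratic term cancels the trace term exactly. Everything else is standard Gaussian algebra.
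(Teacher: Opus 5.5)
Your proposal is correct and follows essentially the paper's route: covariance of an affine image of $(\widetilde x_t,\eta_t)$, the Gaussian conditioning formula, the Gaussian KL formula with the expected quadratic term cancelling the trace term, and the two Schur-complement factorizations of $\det\Sigma_{t|t-1}$. The only differences are cosmetic: you compute the covariance blocks directly rather than via the block-triangular congruence, and you get $P_{t|t}\succ0$ from Theorem~1 rather than from $\Sigma_{t|t-1}\succ0$.
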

\begin{proof}
Write $P \triangleq P_{t|t-1}$, $C \triangleq C_{t|t-1}$, $L \triangleq L_t$, $S \triangleq S_t$, and $\Sigma \triangleq \Sigma_{t|t-1}$. First, note that \begin{equation*}\begin{bmatrix} \widetilde x_t \\ \widetilde z_t \end{bmatrix} = \begin{bmatrix}I & 0 \\ C^\top P^{-1} & I\end{bmatrix}\begin{bmatrix} \widetilde x_t \\ \eta_t \end{bmatrix} + \begin{bmatrix} 0 \\ h(\hat x_{t|t-1})-C^\top P^{-1} \hat x_{t|t-1} \end{bmatrix},\end{equation*}
so $(\widetilde x_t, \widetilde z_t)$, being an affine image of a jointly Gaussian vector, is jointly Gaussian; its covariance is given by \begin{equation*}\begin{bmatrix} I & 0 \\C^\top P^{-1} & I \end{bmatrix} \begin{bmatrix} P & 0 \\ 0 & L L^{\top} \end{bmatrix} \begin{bmatrix}I & P^{-1} C \\ 0 & I\end{bmatrix} = \Sigma.\end{equation*}
This proves item 1. For item 2, note that $S \succ 0$, and use the Gaussian conditioning formula~\cite[Proposition~3.13]{Ea2007}. Item 3 immediately follows from noting that the marginal law of $\widetilde z_t$ is $\mathcal N(h(\hat x_{t|t-1}),S)$.

For item 4, first note that $P\succ0$ and $S-C^\top P^{-1}C = LL^\top \succ0$ imply $\Sigma \succ 0$. In particular, the Schur complement of $S$ in $\Sigma$, i.e., $P-CS^{-1}C^\top = P_{t|t}$ is also positive definite. Applying the Gaussian KL divergence formula~\cite{SaSt2023} gives
\begin{align*}
&D_{\mathrm{KL}}\left(q_t^+(\cdot\mid \widetilde z_t)\,\|\,q_t^- \right)\\
&= \frac 1 2 \left[\operatorname{tr}\left(P^{-1}(P-CS^{-1}C^\top)\right)-n_x
+\log\frac{\det P}{\det(P-CS^{-1}C^\top)} \right]\\
&\quad +\frac12\left(\widetilde z_t-h(\hat x_{t|t-1})\right)^\top S^{-1}C^\top P^{-1}CS^{-1} \left(\widetilde z_t-h(\hat x_{t|t-1})\right).
\end{align*}

Now we take expectations with respect to $\widetilde z_t$. Recall that
\begin{equation*}
\mathbb E\left[(\widetilde z_t-h(\hat x_{t|t-1}))(\widetilde z_t-h(\hat x_{t|t-1}))^\top\right]
=S,
\end{equation*}
and hence
\begin{align*}
&\mathbb E \left[\left(\widetilde z_t-h(\hat x_{t|t-1})\right)^\top S^{-1}C^\top P^{-1}CS^{-1} \left(\widetilde z_t-h(\hat x_{t|t-1})\right)\right ]\\
&=\mathbb E \left[\operatorname{tr} 
\left(S^{-1}C^\top P^{-1}CS^{-1}\left(\widetilde z_t-h(\hat x_{t|t-1})\right) \left(\widetilde z_t-h(\hat x_{t|t-1})\right)^\top \right)
\right] \\
&= \operatorname{tr}\left( S^{-1} C^\top P^{-1} C\right).
\end{align*}

On the other hand,
\begin{equation*}
\operatorname{tr}\left(P^{-1}(P-CS^{-1}C^\top)\right)-n_x=-\operatorname{tr}\left(P^{-1}CS^{-1}C^\top\right).
\end{equation*}
By the cyclic invariance of trace, the expected KL divergence simplifies to
\begin{align*}
\mathbb E\left[ D_{\mathrm{KL}}\left(q_t^+(\cdot\mid \widetilde z_t)\,\|\,q_t^-\right)\right]
&=\frac 1 2 \log\frac{\det P}{\det(P-CS^{-1}C^\top)}\\&=\frac12 \log\frac{\det P_{t|t-1}}{\det P_{t|t}}.
\end{align*}

It remains to prove the second equality in \eqref{eq:surrogate_kl}. Since $P$ and $S$ are invertible, the Schur complement determinant formula gives $\det\Sigma = \det(P)\det\!\bigl(S-C^\top P^{-1}C\bigr) = \det(P)\det(LL^\top)$ and also $\det\Sigma = \det(S)\det\!\bigl(P-CS^{-1}C^\top\bigr)$. Comparing these two expressions for $\det\Sigma$, we obtain
\begin{equation*}
\det(P)\det(LL^\top) = \det(S)\det(P-CS^{-1}C^\top).
\end{equation*}
Rearranging the terms completes the proof.
\end{proof}

When $\Delta C_t = 0$ and $\Delta L_t = 0$ for all $t$, SN-KF reduces to the nominal EKF. Our final theorem states that, over a fixed finite horizon, small correction terms lead to trajectories that remain close to the nominal EKF trajectory.

\begin{theorem}[Finite-horizon closeness to nominal EKF]
Fix $T \ge 0$, a measurement sequence $\{z_t\}_{t=0}^{T}$, an input sequence $\{u_t\}_{t=0}^{T-1}$, and a common Gaussian prior with mean $\hat x_{0|-1}$ and covariance $P_{0|-1} \succ 0$. Let $\{(\bar x_{t|t},\bar P_{t|t})\}_{t=0}^{T}$ denote the nominal EKF trajectory, and let $\{(\hat x_{t|t},P_{t|t})\}_{t=0}^{T}$ denote the corresponding SN-KF trajectory with corrections $\Delta C_t$ and $\Delta L_t$.

Assume that:
\begin{enumerate}
\item for each $t=0,\dots,T-1$, the map $x \mapsto f(x,u_t)$ is $C^2$, and $h$ is $C^2$;
\item there exist constants $q,r>0$ such that $Q_t \succeq qI$ for $t=0,\dots,T-1$ and $R_t \succeq rI$ for $t=0,\dots,T$.
\end{enumerate}

Define
\begin{equation*}
\varepsilon_C \triangleq \max_{0\le t\le T}\|\Delta C_t\|, \qquad \varepsilon_L \triangleq \max_{0\le t\le T}\|\Delta L_t\|.
\end{equation*}
Then there exist constants $\varepsilon_0>0$ and $c_T>0$, independent of the particular correction sequence $\Delta C_t$ and $\Delta L_t$, such that whenever $\varepsilon_C+\varepsilon_L \le \varepsilon_0$, the SN-KF recursion is well-defined up to time $t=T$ and satisfies
\begin{equation}
\max_{0\le t\le T} \Bigl(\|\hat x_{t|t}-\bar x_{t|t}\|+\|P_{t|t}-\bar P_{t|t}\|\Bigr)
\le c_T(\varepsilon_C+\varepsilon_L).
\label{eq:finite_horizon_closeness}
\end{equation}
\end{theorem}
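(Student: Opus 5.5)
We argue by finite induction on $t$, treating one SN-KF step as a locally Lipschitz map of the previous posterior $(\hat x_{t-1|t-1},P_{t-1|t-1})$ and the corrections $(\Delta C_t,\Delta L_t)$. The nominal EKF step is the same map evaluated at $\Delta C_t=\Delta L_t=0$. We first fix the nominal trajectory, which is well-defined on $0\le t\le T$ because $R_t\succeq rI$ makes $\bar S_t\succ0$. This trajectory is a finite collection of points, so it lies in a compact set. Around it we choose a compact neighborhood $\mathcal N_t$ of radius $\rho$ in the $(x,P)$ variables. On that neighborhood the Jacobians $F_{t-1}$ and $H_t$ are $C^1$ functions of the mean, and hence Lipschitz there by assumption 1, because $f$ and $h$ are $C^2$.

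\textbf{One-step stability.} On $\mathcal N_{t-1}$, the prediction map $(x,P)\mapsto(f(x,u_{t-1}),F_{t-1}(x)PF_{t-1}(x)^\top+Q_{t-1})$ is Lipschitz with some constant $\ell^{\mathrm{pred}}_t$. It also yields $P_{t|t-1}\succeq qI$ for $t\ge1$; at $t=0$ we use $P_{0|-1}\succ0$. Hence $P_{t|t-1}^{-1}$ is uniformly bounded and Lipschitz in $P_{t|t-1}$.

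For the update, we write $C=PH_t^\top+\Delta C_t$, $L=\bar L_t+\Delta L_t$ and $S=C^\top P^{-1}C+LL^\top$. Since $\bar L_t\bar L_t^\top=R_t\succeq rI$, the smallest singular value of $\bar L_t$ is at least $\sqrt r$. Taking $\varepsilon_L\le\sqrt r/2$ then gives $\sigma_{\min}(L)\ge\sqrt r/2$, so $S\succeq (r/4)I$. This keeps $S^{-1}$ bounded and Lipschitz, which is exactly the well-definedness claim; the clamping of the diagonal of $L_t$ is then inactive. The Lipschitz dependence of $S$ on $L$ uses $\|LL^\top-\bar L\bar L^\top\|\le(2\|\bar L\|+\varepsilon_L)\varepsilon_L$.

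It follows that $K=CS^{-1}$, $\hat x_{t|t}=\hat x_{t|t-1}+K\nu_t$ with $\nu_t=z_t-h(\hat x_{t|t-1})$, and $P_{t|t}=P-CS^{-1}C^\top$ are all compositions of Lipschitz maps on the bounded set. We obtain
\begin{equation*}
e_t\le \ell_t\, e_{t-1}+m_t(\varepsilon_C+\varepsilon_L),
\end{equation*}
where $e_t$ is the combined mean and covariance error at time $t$, and $\ell_t,m_t$ depend only on the nominal trajectory, $\rho$, $q$, $r$ and the data. We take $e_{-1}=0$ because the prior is common.

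\textbf{Main obstacle.} The hardest step is closing the induction, because the Lipschitz constants are valid only while the perturbed trajectory stays inside the neighborhoods $\mathcal N_t$. Unrolling the recursion gives $e_t\le c_t(\varepsilon_C+\varepsilon_L)$ with $c_t=\sum_{s\le t}m_s\prod_{s<j\le t}\ell_j$, and we set $c_T=\max_t c_t$. We then choose $\varepsilon_0=\min(\sqrt r/2,\ \rho/c_T)$. With this choice, induction on $t$ shows that each perturbed predictive and posterior pair remains within distance $\rho$ of its nominal counterpart, so the constants used at the next step remain valid. This yields \eqref{eq:finite_horizon_closeness}.

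Some care is needed at the predictive level: the neighborhood must also control $\hat x_{t|t-1}$ and $P_{t|t-1}$, so that $H_t$ and $h(\hat x_{t|t-1})$ are evaluated within a compact region. We handle this by building the predictive-level neighborhood as the image of $\mathcal N_{t-1}$ under the prediction map, enlarged slightly, and compact. None of the constants depend on the particular correction sequence, only on the bounds $\varepsilon_C$ and $\varepsilon_L$, as the theorem requires.
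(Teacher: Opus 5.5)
Your proposal is correct and follows essentially the same route as the paper. Both arguments use Lipschitz continuity of the $C^1$ prediction and update maps on a compact $\rho$-neighborhood of the nominal trajectory, and the bound $\sigma_{\min}(L_t)\ge\sqrt r/2$, hence $S_t\succeq (r/4)I$, on the correction ball. Both then close a finite induction with $\varepsilon_0=\min\{\sqrt r/2,\rho/c_T\}$, which keeps the perturbed trajectory inside the neighborhood.

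The differences are minor. The paper chooses $\rho$ so that the balls lie in $\mathbb{R}^{n_x}\times\mathbb{S}^{n_x}_{++}$; you should also do this, or cite $P_{t|t}\succ0$ from the earlier theorem, to justify $P_{t|t-1}\succeq qI$ and a bounded $P^{-1}$. The paper also puts the predicted-pair constants $L_\Phi a_t$ into $c_T$, whereas you use an image neighborhood. Because of that, your sentence claiming predictive pairs stay within distance $\rho$ should be dropped.
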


\begin{proof}
Define the norm $\|(x,P)\|_* \triangleq \|x\|+\|P\|$ for $(x,P) \in \mathbb{R}^{n_x} \times \mathbb{S}^{n_x}$. Denote the nominal EKF predicted and posterior pairs by
\begin{equation*}
\bar X_t^- \triangleq (\bar x_{t|t-1},\bar P_{t|t-1}), \qquad \bar X_t^+ \triangleq (\bar x_{t|t},\bar P_{t|t}), \qquad t=0,\dots,T.
\end{equation*}

We first note that all nominal EKF covariances are positive definite. First, $P_{0|-1}\succ0$ by assumption. For each $t \ge 0$, if $\bar P_{t|t-1} \succ 0$ then $\bar S_t  \succeq rI \succ 0$ so the EKF update is well-defined.
Moreover, by the Woodbury matrix identity, $\bar P_{t|t} = \left(\bar P_{t|t-1}^{-1}+\bar H_t^\top R_t^{-1} \bar H_t\right)^{-1}$, where $\bar H_t$ denotes the Jacobian $\frac{\partial h}{\partial x}$ evaluated at the EKF's predicted state at time $t$. Hence, $\bar P_{t|t} \succ 0$. Also, for each $t \ge 0$, $\bar P _{t|t} \succ 0$ implies $\bar P_{t+1|t}  \succeq qI \succ 0$. By induction, all nominal predicted and posterior covariances are positive definite. Therefore, there exists $\rho>0$ such that each closed $\rho$-ball in $\mathbb{R}^{n_x} \times \mathbb{S}^{n_x}$ centered at $\bar X_t ^\pm$ is contained in the open set $\mathbb{R}^{n_x} \times \mathbb{S}^{n_x}_{++}$. Denote the union of all these balls as $N_\rho$, and note that this set is compact.

For each $t=0,\dots,T-1$, define the prediction map (which is used for both the nominal EKF and SN-KF)
\begin{equation*}
\Phi_t(x,P) \triangleq \Bigl(f(x,u_t),\,F(x,u_t)\,P\,F(x,u_t)^\top + Q_t \Bigr),
\end{equation*}
where $F(x,u_t)\triangleq \frac{\partial f}{\partial x}(x,u_t)$.
Since $x\mapsto f(x,u_t)$ is $C^2$, the map $\Phi_t$ is $C^1$. For each $t=0,\dots,T$, define the SN-KF update map
\begin{equation*}
U_t(x,P,\Delta C,\Delta L)\triangleq (x^+,P^+)
\end{equation*}
where
\begin{align*}
H(x) \triangleq \frac{\partial h}{\partial x}(x),& \quad \bar C(x,P) \triangleq P H(x)^\top, \\
C = \bar C(x,P) + \Delta C,& \quad L = \bar L_t + \Delta L, \\
S = C^\top P^{-1} C + LL^\top,& \quad K = C S^{-1}, \\
x^+ = x + K\bigl(z_t-h(x)\bigr),& \quad P^+ = P - K C^\top - C K^\top + K S K^\top.
\end{align*}

The natural domain of $U_t$ is characterized by the inequalities $P \succ 0$ and $S \succ 0$, hence is an open set; note that $U_t$ is $C^1$ on this set.

Now, define the closed set
\begin{equation*}
B \triangleq \left\{ (\Delta C,\Delta L): \|\Delta C\|+\|\Delta L\| \le \sqrt{r}/2 \right\}.
\end{equation*}
Suppose $(\Delta C, \Delta L) \in B$. Then, for each unit-norm vector $x$, we have $\| \bar L_t ^\top x \|^2 = x^\top R_t x \ge r$,  so that $\| L^\top x \| \ge \| \bar L_t ^\top x \| - \| \Delta L^\top x\| \ge \sqrt r - \| \Delta L \|  \ge \sqrt{r}/2$. Therefore, $x^\top LL^\top x = \| L^\top x\|^2  \ge \frac{r}{4}$ for all unit-norm $x$. Hence, we have $LL^\top \succeq (r/4) I$ and $S = C^\top P^{-1} C + LL^\top \succeq \frac r4 I \succ 0$; therefore the domain of $U_t$ contains the compact set $N_\rho \times B$. 

As $C^1$ maps are Lipschitz on compact sets, there exist constants $L_\Phi,L_U>0$ such that for all relevant $t$,
\begin{align}
\|\Phi_t(X)-\Phi_t(Y)\|_* &\le L_\Phi \|X-Y\|_*,
\label{eq:pred_lipschitz}
\\
\|U_t(X,\Delta)-U_t(Y,0)\|_* &\le L_U \left( \|X-Y\|_* +  \|\Delta C\|+\|\Delta L\| \right),
\label{eq:update_lipschitz}
\end{align}
for all $X,Y\in N_\rho$ and all $\Delta=(\Delta C,\Delta L)\in B$.

Now let $X_t^- \triangleq (\hat x_{t|t-1},P_{t|t-1})$ and $X_t^+ \triangleq (\hat x_{t|t},P_{t|t})$ for $t=0,\dots,T$ denote the SN-KF predicted and posterior pairs, respectively. Define $d_t^- \triangleq \| X_t^- - \bar X_t^- \|_*$, $d_t^+ \triangleq \| X_t^+ - \bar X_t^+  \|_*$. Recursively define the following constants $a_i$:
\begin{equation*}
a_{-1} \triangleq 0, \qquad a_{t} \triangleq L_U (L_\Phi\, a_{t-1} + 1), \qquad t=0,\dots,T,
\end{equation*}
and set
\begin{equation*}
c_T \triangleq \max \Bigl( \{a_t:\ 0\le t\le T\} \cup \{L_\Phi a_t:\ 0\le t\le T-1\} \Bigr).
\end{equation*}
Finally, choose $\varepsilon_0  \triangleq \min\left\{\frac{\sqrt{r}}{2},\frac{\rho}{c_T}\right\} >0$.

Now, we prove by induction that if $\varepsilon \triangleq \varepsilon_C + \varepsilon_L \le\varepsilon_0$, then for all $t=0,\dots,T$,

\begin{equation}
d_t^- \le L_\Phi a_{t-1}\varepsilon, \qquad X_t^- \in N_\rho,
\label{eq:induction_minus}
\end{equation}
and 
\begin{equation}
d_t^+ \le a_t \varepsilon, \qquad X_t^+ \in N_\rho.
\label{eq:induction_plus}
\end{equation}

At $t=0$, we have $X_0^-=\bar X_0^-$, so $d_0^-=0$ and
$X_0^- \in N_\rho$. Also, $\|\Delta C_0\|+\|\Delta L_0\| \le \varepsilon \le \frac{\sqrt{r}}{2}$, so $\Delta_0 \triangleq(\Delta C_0,\Delta L_0)\in B$. Hence, by \eqref{eq:update_lipschitz},
\begin{equation*}
d_0^+ = \|U_0(X_0^-,\Delta_0)-U_0(\bar X_0^-,0)\|_* \le L_U \varepsilon = a_0 \varepsilon.
\end{equation*}
Since $\varepsilon\le \rho/c_T\le \rho/a_0$, this also implies $X_0^+\in N_\rho$. Therefore, both \eqref{eq:induction_minus} and \eqref{eq:induction_plus} hold for $t=0$.

Now, assume both \eqref{eq:induction_minus} and \eqref{eq:induction_plus} hold for some $t\in\{0,\dots,T-1\}$. Because $X_t^+,\bar X_t^+\in N_\rho$, we may apply \eqref{eq:pred_lipschitz} to obtain
\begin{equation*}
d_{t+1}^- = \|\Phi_t(X_t^+)-\Phi_t(\bar X_t^+)\|_* \le L_\Phi d_t^+ \le L_\Phi a_t\varepsilon.
\end{equation*}
Since $\varepsilon\le \rho/c_T\le \rho/(L_\Phi a_t)$, it follows that $X_{t+1}^- \in N_\rho$. This proves \eqref{eq:induction_minus} for $t+1$.

Next, we have $\|\Delta C_{t+1}\|+\|\Delta L_{t+1}\| \le \sqrt{r}/2$, so $\Delta_{t+1}\triangleq(\Delta C_{t+1},\Delta L_{t+1})\in B$. Because both $X_{t+1}^-$ and $\bar X_{t+1}^-$ lie in $N_\rho$, we may apply \eqref{eq:update_lipschitz} to get $d_{t+1}^+ = \|U_{t+1}(X_{t+1}^-,\Delta_{t+1})-U_{t+1}(\bar X_{t+1}^-,0)\|_* \le L_U (d_{t+1}^- + \varepsilon) \le L_U (L_\Phi a_t \varepsilon + \varepsilon) = a_{t+1}\varepsilon$.
Since $\varepsilon\le \rho/c_T \le \rho/a_{t+1}$, we also obtain $X_{t+1}^+ \in N_\rho$. Therefore, \eqref{eq:induction_plus} holds for $t+1$, completing the induction.

Therefore, for all $t=0,\dots,T$, we have $d_t^+ \le a_t(\varepsilon_C + \varepsilon_L) \le c_T(\varepsilon_C + \varepsilon_L)$,
and we conclude that the claim~\eqref{eq:finite_horizon_closeness} holds.
\end{proof}

\input{experimental_results}

\section{CONCLUSIONS}
We presented SN-KF, a structured perturbation to the update step of the extended Kalman filter. By learning corrections to the predictive joint covariance of the state--measurement pair, the method retains a coherent Gaussian-conditioning interpretation together with an explicit innovation covariance. Theoretical analysis shows that SN-KF has many desirable properties: incorporating a measurement does not increase the filter's state uncertainty, and no measurement can induce an arbitrarily large state correction relative to its statistical surprise. Also, a perturbative argument suggests that SN-KF exhibits reduced error in the data-scarce regime. Finally, we prove a surrogate Gaussian model interpretation of SN-KF and also a closeness theorem relative to the nominal EKF.

The two experiments expose practical benefits of SN-KF. In the two-radar study, SN-KF provides a much broader failure-free tuning region and reduced RMSE for data-scarce training sets. In the unicycle study, SN-KF attains the best precision, recall, false alarm rate, and gated RMSE under the theoretical $\chi^2_6$ NIS gate. These results suggest that SN-KF makes learned Kalman updates both more reliable to train and more trustworthy for innovation-based decisions.

%%%%%%%%%%%%%%%%%%%%%%%%%%%%%%%%%%%%%%%%%%%%%%%%%%%%%%%%%%%%%%%%%%%%%%%%%%%%%%%%
%\section{ACKNOWLEDGMENTS}

%%%%%%%%%%%%%%%%%%%%%%%%%%%%%%%%%%%%%%%%%%%%%%%%%%%%%%%%%%%%%%%%%%%%%%%%%%%%%%%%

\vspace{-5pt}
\bibliographystyle{IEEEtran}
\bibliography{references}

\end{document}

%% file: experimental_results.tex
\section{Experimental Results}
We next evaluate two practical consequences of Schur-consistency: training robustness and reliability of the resulting innovation statistics. The two-radar experiment examines numerical stability across correction scale initializations and state estimation accuracy in the data-scarce regime. The unicycle experiment asks whether a Schur-consistent learned joint covariance improves NIS calibration and downstream fault-gated estimation.

\subsection{Experimental Setup}
We compare SN-KF with a capacity-matched \emph{no-Schur} ablation, learned gain correction, and an inflation-tuned EKF. SN-KF and no-Schur use the same neural architecture (hence the same number of parameters) and neural network initialization method; no-Schur omits the coupling of the corrected cross-covariance into the innovation covariance:
\begin{equation}
S_t^{\rm NS}=H_t P_{t|t-1} H_t^\top +L_tL_t^\top,\qquad K_t^{\rm NS}=C_{t|t-1}(S_t^{\rm NS})^{-1}.
\end{equation}
SN-KF uses the last equality in \eqref{eq:cov_update}, and no-Schur uses the Joseph form covariance update $P_{t|t}^{\rm NS} = (I-K_t^{\rm NS}H_t)P_{t|t-1}(I-K_t^{\rm NS}H_t)^\top + K_t^{\rm NS}L_tL_t^\top(K_t^{\rm NS})^\top$.
All neural methods use full backpropagation through the filtering recursion, gradient clipping at one, and then an AdamW update; they are initialized so that $\Delta C_t$, $\Delta L_t$, $\Delta K_t$ are all zero, and are trained and selected using state MSE. Within each benchmark, they share the peak learning rate, learning rate schedule, weight decay, and mini-batch size. The first epoch warms the learning rate from 1\% to 100\% of its peak, followed by cosine decay to 1\%. SN-KF and no-Schur have the same number of neural parameters; compared to these, the learned gain correction baseline has a parameter count differing by less than 0.2\%.

For the inflation-tuned EKF, we rescale the predictive covariance by $P_{t|t-1}\leftarrow\gamma^2P_{t|t-1}$ before computing the innovation covariance and Kalman gain, with $\gamma>0$ selected using validation state MSE. In the two-radar experiment, this rescaling is applied only when measurements are available.
\subsection{Two Radars: Correction Scale Tuning and Training Stability}
We consider a 5-dimensional constant-turn-rate vehicle model observed by two fixed range--bearing radars with intermittent sensing. Measurements are jointly available from both radars with probability 25\% at each of the 40 filtering steps. The state is $x_t=[p_{x,t},p_{y,t},v_{x,t},v_{y,t},\omega_t]^\top$, and the radar locations are $(-20,-5)$ and $(20,-5)$. The time step for discretization is $\Delta t = 0.35$.
The dynamics have no control input. In this benchmark, all neural methods replace $u_{t-1}$ in \eqref{eq:encoder_history_vec} with the current measurement-availability flag $m_t\in\{0,1\}$. The GRU is updated at every step; when measurements are unavailable, the innovation input to the GRU at the next time step is set to zero.
The initial state is Gaussian with mean $[0,0,1,0,0.05]^\top$ and covariance $P_0=\operatorname{diag}(0.3^2,0.3^2,0.15^2,0.15^2,0.02^2)$. We use noise covariances $Q=\operatorname{diag}(0.3^2,0.03^2,0.5^2,0.05^2,0.002^2)$ and $R=\operatorname{diag}(0.5^2,0.0175^2,0.1^2,0.0175^2)$, with measurement order $[r_1,\theta_1,r_2,\theta_2]$. For $t\ge1$, the pairs $(w_{t-1},v_t)$ are i.i.d. Gaussian with cross-covariance $N=Q^{1/2}BR^{1/2}$, where $B\in\mathbb R^{5\times4}$ has entries $B_{11}=-0.49$, $B_{13}=B_{31}=0.85$, and $B_{33}=0.49$, with all other entries being zero. All filters use the true dynamics and measurement functions, the true initial distribution, $Q$, and $R$, but assume $N=0$.

We first use 700 training and 175 validation trajectories to select the initial global correction scale hyperparameters. We report results on 1,000 test trajectories. The validation and test sets are common throughout the two-radar experiment.
SN-KF and no-Schur are evaluated on the same initial 36-point grid $\alpha_C\in\{0.001,\allowbreak 0.01,\allowbreak 0.1,\allowbreak 1,\allowbreak 10,\allowbreak 100\}$ and $\alpha_L\in\{0.001, 0.01,\allowbreak 0.1,\allowbreak 1,\allowbreak 10,\allowbreak 100\}$ using three random seeds, hence $36 \times 3 = 108$ training runs per method. For the no-Schur ablation, 42 out of 108 training runs numerically failed (``Initial Succ." in Table~\ref{tab:radar_hyperparameter_robustness}). Learned gain correction is evaluated on 11 $\alpha_K$ candidates in $\{0.01,\allowbreak 0.03,\allowbreak 0.1,\allowbreak 0.3,\allowbreak 1,\allowbreak 3,\allowbreak 10,\allowbreak 30,\allowbreak 100,\allowbreak 300,\allowbreak 1000\}$, and hence $11 \times 3 = 33$ training runs are performed. All neural methods use the same peak learning rate $5 \times 10^{-3}$.
After this 15-epoch grid search, we further refine the chosen hyperparameters using 30 epochs per run, selecting $(\alpha_C,\alpha_L)=(0.316228,1)$ for both SN-KF and no-Schur and $\alpha_K=0.547723$ for learned gain correction. Refinement comprises 93 runs over 31 configurations for each of SN-KF/no-Schur and 18 runs over 6 gain correction configurations. SN-KF and no-Schur share all evaluated configurations throughout the refinement.

We then evaluate hyperparameter robustness with respect to a fixed, small training set (last two columns in Table~\ref{tab:radar_hyperparameter_robustness}). Around each selected setting, we evaluate nine nearby configurations on a fixed 30-trajectory subset of the original 700 training trajectories using three fresh random seeds: a $3\times3$ grid for SN-KF and no-Schur, and nine nearby $\alpha_K$ values for learned gain correction. All nine local configurations remain valid for SN-KF. No-Schur has no valid local configurations, while gain correction has two.

\begin{table}[t]
\vspace{5pt}
\caption{Numerical stability of training for the two-radar benchmark. ``Initial Succ.'' counts successful runs over the initial grids before refinement. For neural methods, ``Val. RMSE'' and ``Test RMSE'' are three-seed means at the final selected scales, trained using 30 training epochs. A local configuration is valid only if all three 30-epoch runs complete successfully. The EKF uses validation-selected inflation $\gamma =0.905$.}
\label{tab:radar_hyperparameter_robustness}
\centering
\setlength{\tabcolsep}{1.55pt}
\scriptsize
\begin{tabular}{@{}lrrrrr@{}}
\hline
Method & Initial Succ. & Val. RMSE & Test RMSE & Local Succ. & Valid \\
\hline
SN-KF & \textbf{108/108} & 1.5393 & 1.7006 & \textbf{27/27} & \textbf{9/9} \\
No-Schur & 66/108 & \textbf{1.5359} & \textbf{1.6913} & 7/27 & 0/9 \\
Gain correction & 18/33 & 1.6222 & 1.7708 & 19/27 & 2/9 \\
EKF (infl.-tuned) & -- & 1.7993 & 1.9045 & -- & -- \\
\hline
\end{tabular}
\vspace{-10pt}
\end{table}

\begin{table}[t]
\vspace{5pt}
\caption{Results from training on 100 different small subsets. Parentheses in ``Test RMSE'' give the sample standard deviation. The EKF is fixed across subsets, so the corresponding standard deviation is zero.}
\label{tab:radar_subset_training}
\centering
\setlength{\tabcolsep}{1.45pt}
\scriptsize
\begin{tabular}{@{}lrrr@{}}
\hline
Method & Success & Test RMSE & Worst RMSE\\
\hline
SN-KF & \textbf{100/100} & \textbf{1.873(30)} & 1.952 \\
No-Schur & 48/100 & 1.908(43) & 2.017 \\
Gain correction & 61/100 & 1.909(20) & 2.023 \\
EKF (infl.-tuned) & -- & 1.904(0) & \textbf{1.904} \\
\hline
\end{tabular}
\vspace{-15pt}
\end{table}

Finally, to test state estimation performance in the data-scarce regime, we form one hundred 30-trajectory training subsets from the original training set and train each method once per subset for 30 epochs (Table~\ref{tab:radar_subset_training}). For each neural method, we report the mean test RMSE over the successful runs. ``Worst RMSE'' is the largest test RMSE among such runs. In conclusion, Schur-consistency makes training substantially more reliable, while SN-KF remains competitive in terms of the state estimation performance throughout this experiment, achieving the lowest test RMSE in the small-training-set experiment.

\subsection{Unicycle: NIS Calibration and Fault-Gated Estimation}
We consider a 4-dimensional unicycle with state vector $x_t=[p_{x,t},p_{y,t},\theta_t,v_{f,t}]^\top$, where $v_{f,t}$ denotes forward velocity. Each measurement consists of three range--bearing pairs from three fixed beacons, hence is six-dimensional. The ground-truth system and all filters share the same dynamics and measurement models. The only model mismatch hidden from the filters is a nonzero process--measurement noise cross-covariance $N$. We generate the process and measurement noises as $w_{t-1}=G\xi_t+\epsilon_t$ and $v_t=D\xi_t+\eta_t$, where $\xi_t\sim\mathcal N(0,I_2)$, $\epsilon_t \sim \mathcal N(0,Q-GG^\top)$, and $\eta_t \sim \mathcal N(0,R-DD^\top)$ are mutually independent. Note that $\operatorname{Cov}(w_{t-1})=Q$, $\operatorname{Cov}(v_t)=R$, and $N\triangleq \operatorname{Cov}(w_{t-1},v_t)=GD^\top \neq 0$.

We use 128/128 train/validation trajectories of length 50. All neural methods use a peak learning rate of $3 \times 10^{-4}$. The validation set was used to choose the initial values of the global correction scales by minimizing the mean best validation MSE over three random seeds, with 15 epochs per tuning run. The selected initial values are $(\alpha_C,\alpha_L)=(0.166667,1.83333)$ for SN-KF, $(\alpha_C,\alpha_L)=(0.433333,1.16667)$ for no-Schur, and $\alpha_K=3.22222$ for gain correction; the selected EKF inflation is $\gamma=1.02$. Then, each neural method is trained from scratch for 30 epochs using five fresh random seeds, with its global correction scales initialized to the previously selected values.

\paragraph{NIS Calibration}
\begin{table}[t]
\vspace{5pt}
\caption{Projected NIS over all 50 steps of healthy trajectories. The $\chi^2_2$ reference values are $2/0.95/0.99$.}
\label{tab:unicycle_nis}
\centering
\setlength{\tabcolsep}{3.1pt}
\scriptsize
\begin{tabular}{@{}lrrr@{}}
\hline
Method & Proj. NIS & C95 & C99 \\
\hline
SN-KF & \textbf{1.945} & \textbf{0.9539} & \textbf{0.9912} \\
No-Schur & 1.330 & 0.9866 & 0.9984 \\
Gain correction & 1.419 & 0.9832 & 0.9975 \\
EKF (infl.-tuned) & 1.493 & 0.9808 & 0.9972 \\
\hline
\end{tabular}
\vspace{-5pt}
\end{table}

We report $\operatorname{NIS}^{\rm proj}_t=(U^\top\nu_t)^\top (U^\top S_tU)^{-1} (U^\top\nu_t)$, whose reference distribution is $\chi^2_2$. Here, $U\in\mathbb R^{6\times2}$ consists of orthonormal columns spanning the two-dimensional column space of $D$. On a 1500-trajectory test set, SN-KF attains NIS 1.945 with empirical 95\%/99\% coverages $0.9539/0.9912$, closely matching the references $2/0.95/0.99$ without any further recalibration (Table~\ref{tab:unicycle_nis}).

\paragraph{Fault-Gated Estimation}
We next test the filters under common bearing faults. At a faulty time step, we add the same angular offset to all three bearing measurements, while leaving the three range measurements unchanged. At each of the remaining 45 time steps after the first 5 steps, a fault occurs with probability $0.1$. After accepting the first five updates, each filter computes the NIS using the six measurement components, and accepts the measurement update only when $\operatorname{NIS}_t$ is no greater than the 99th percentile of $\chi^2_6$. If the NIS exceeds this threshold, the filter skips that measurement update. For testing, we use 1500 test trajectories, which are the corrupted-measurement versions of the test set used for the projected NIS experiment.

\begin{table}[t]
\caption{Innovation-based gating for $[9,11]\sigma$ common bearing faults, where $\sigma=0.03$ rad; FAR = False Alarm Rate. Gated RMSE is computed over the full 50-step trajectory.}
\label{tab:unicycle_fault_gating}
\vspace{-5pt}
\centering
\setlength{\tabcolsep}{1.35pt}
\scriptsize
\begin{tabular}{@{}lrrrr@{}}
\hline
Method & Precision & Recall & FAR & Gated RMSE \\
\hline
SN-KF & \textbf{0.808} & \textbf{0.937} & \textbf{0.0249} & \textbf{0.2104} \\
No-Schur & 0.534 & 0.692 & 0.0677 & 0.2523 \\
Gain correction & 0.523 & 0.814 & 0.0829 & 0.2800 \\
EKF (infl.-tuned) & 0.713 & 0.839 & 0.0377 & 0.2229 \\
\hline
\end{tabular}
\vspace{-15pt}
\end{table}

SN-KF has the highest precision and recall, together with the lowest FAR and gated RMSE (Table~\ref{tab:unicycle_fault_gating}). Relative to no-Schur, it improves precision, recall, and FAR by 27.44, 24.49, and 4.28 percentage points, respectively, and reduces gated RMSE by 16.63\%. The unicycle experiments suggest that SN-KF successfully maintains a coherent statistical meaning of the Kalman innovation vector, which is useful for sensor-fault detection and downstream fault-gated state estimation.

%% file: references.bib
@article{ReSh2022,
title = "{KalmanNet}: Neural Network Aided {Kalman} Filtering for Partially Known Dynamics",
author = "Guy Revach and Nir Shlezinger and Xiaoyong Ni and Escoriza, \{A. L.\} and van Sloun, \{Ruud J.G.\} and Eldar, \{Yonina C.\}",
year = "2022",
doi = "10.1109/TSP.2022.3158588",
language = "English",
volume = "70",
pages = "1532--1547",
journal = "IEEE Trans. Signal Process.",
issn = "1053-587X",
}

@ARTICLE{ChPa2023,
  author={Choi, Geon and Park, Jeonghun and Shlezinger, Nir and Eldar, Yonina C. and Lee, Namyoon},
  journal={IEEE Trans. Veh. Technol.}, 
  title={{Split-KalmanNet}: A Robust Model-Based Deep Learning Approach for State Estimation}, 
  year={2023},
  volume={72},
  number={9},
  pages={12326--12331},
  doi={10.1109/TVT.2023.3270353}}

@inproceedings{MoFa2025,
    author = {Hassan Mortada and Cyril Falcon and Yanis Kahil and Mathéo Clavaud and Jean-Philippe Michel},
    title = {Recursive {KalmanNet}: Deep Learning-Augmented {Kalman} Filtering for State Estimation with Consistent Uncertainty Quantification},
    booktitle = {33rd European Signal Processing Conference},
    year = {2025},
    pages = {885--889}
}

@inproceedings{ChVa2014,
    title = "Learning Phrase Representations using {RNN} Encoder{--}Decoder for Statistical Machine Translation",
    author = {Cho, Kyunghyun  and
      van Merri{\"e}nboer, Bart  and
      Gulcehre, Caglar  and
      Bahdanau, Dzmitry  and
      Bougares, Fethi  and
      Schwenk, Holger  and
      Bengio, Yoshua},
    booktitle = "Proceedings of the 2014 Conference on Empirical Methods in Natural Language Processing ({EMNLP})",
    year = "2014",
    address = "Doha, Qatar",
    doi = "10.3115/v1/D14-1179",
    pages = "1724--1734"
}

@article{Ca1973,
author = {Carlson, Neal A.},
title = {Fast triangular formulation of the square root filter},
journal = {AIAA Journal},
volume = {11},
number = {9},
pages = {1259--1265},
year = {1973},
doi = {10.2514/3.6907},
eprint = { 
        https://doi.org/10.2514/3.6907
}
}

@ARTICLE{KoSh2025,
  author={Ko, Minhyeok and Shafieezadeh, Abdollah},
  journal={IEEE Signal Process. Lett.}, 
  title={{Cholesky-KalmanNet}: Model-Based Deep Learning With Positive Definite Error Covariance Structure}, 
  year={2025},
  volume={32},
  number={},
  pages={326--330},
  doi={10.1109/LSP.2024.3519265}}

@ARTICLE{DaRe2025,
  author={Dahan, Yehonatan and Revach, Guy and Dunik, Jindrich and Shlezinger, Nir},
  journal={IEEE Trans. Signal Process.}, 
  title={Bayesian {KalmanNet}: Quantifying Uncertainty in Deep Learning Augmented {Kalman} Filter}, 
  year={2025},
  volume={73},
  number={},
  pages={2558--2573},
  doi={10.1109/TSP.2025.3581703}}

@ARTICLE{ChWi1984,
  author={Chow, E. and Willsky, A.},
  journal={IEEE Trans. Autom. Control}, 
  title={Analytical redundancy and the design of robust failure detection systems}, 
  year={1984},
  volume={29},
  number={7},
  pages={603--614},
  doi={10.1109/TAC.1984.1103593}}

@article{MePe1971,
title = {An innovations approach to fault detection and diagnosis in dynamic systems},
journal = {Automatica},
volume = {7},
number = {5},
pages = {637--640},
year = {1971},
issn = {0005-1098},
doi = {https://doi.org/10.1016/0005-1098(71)90028-8},
author = {R.K. Mehra and J. Peschon}
}

@article{BaBa2026,
title = {Learning enhanced ensemble filters},
journal = {Journal of Computational Physics},
volume = {547},
pages = {114550},
year = {2026},
issn = {0021-9991},
doi = {https://doi.org/10.1016/j.jcp.2025.114550},
author = {Eviatar Bach and Ricardo Baptista and Edoardo Calvello and Bohan Chen and Andrew Stuart}
}

@inproceedings{KrSh2017,
  title={Structured Inference Networks for Nonlinear State Space Models},
  author={Krishnan, Rahul G and Shalit, Uri and Sontag, David},
  booktitle={AAAI},
  year={2017}
}

@book{Ea2007,
  author    = {Morris L. Eaton},
  title     = {Multivariate Statistics: A Vector Space Approach},
  year      = {2007},
  publisher = {Institute of Mathematical Statistics},
}

@ARTICLE{ShCh2025,
    
AUTHOR={Shen, Siyuan  and Chen, Jichen  and Yu, Guanfeng  and Zhai, Zhengjun  and Han, Pujie },
           
TITLE={{KalmanFormer}: using transformer to model the {Kalman} Gain in {Kalman} Filters},
          
JOURNAL={Frontiers in Neurorobotics},
          
VOLUME={18},
  
YEAR={2025},
  
DOI={10.3389/fnbot.2024.1460255},
  
ISSN={1662-5218}}

@book{SaSt2023, place={Cambridge}, series={London Mathematical Society Student Texts}, title={Inverse Problems and Data Assimilation}, publisher={Cambridge University Press}, author={Sanz-Alonso, Daniel and Stuart, Andrew and Taeb, Armeen}, year={2023}, collection={London Mathematical Society Student Texts}}

@article{LiLa2024,
author = {Wei Liu and Zhilu Lai and Kiran Bacsa and Eleni Chatzi},
title ={Neural extended {Kalman} filters for learning and predicting dynamics of structural systems},

journal = {Structural Health Monitoring},
volume = {23},
number = {2},
pages = {1037--1052},
year = {2024},
doi = {10.1177/14759217231179912}
}

@article{CoKl2025,
   title={Adaptive {Kalman}-Informed Transformer},
   volume={146},
   ISSN={0952-1976},
   DOI={10.1016/j.engappai.2025.110221},
   journal={Engineering Applications of Artificial Intelligence},
   publisher={Elsevier BV},
   author={Cohen, Nadav and Klein, Itzik},
   year={2025},
   month=apr, pages={110221} }
